\documentclass[pra,aps,showpacs,amsfonts,superscriptaddress,twocolumn,nofootinbib]{revtex4-2}
\usepackage{color}
\definecolor{myurlcolor}{rgb}{0,0,0.4}
\definecolor{mycitecolor}{rgb}{0,0.5,0}
\definecolor{myrefcolor}{rgb}{0.5,0,0}
\usepackage{hyperref}
\hypersetup{colorlinks,
linkcolor=myrefcolor,
citecolor=mycitecolor,
urlcolor=myurlcolor}
\usepackage[mathscr]{euscript}
\sloppy
\usepackage[draft]{fixme}
\usepackage{tikz}
\usepackage{subfigure}
\usepackage{graphicx}
\usepackage{amsfonts}
\usepackage{amsmath, amssymb, amsthm,verbatim,graphicx,bbm}
\usepackage{palatino}
\usepackage{mathpazo}
\usepackage{dsfont}
\usetikzlibrary{patterns}
\usepackage{soul}
\newcommand{\beq}[0]{\begin{equation}}
\newcommand{\eeq}[0]{\end{equation}}

\newcommand{\outerp}[2]{\ket{#1}\! \bra{#2}}

\newcommand{\ket}[1]{|#1\rangle}
\newcommand{\bra}[1]{\langle#1|}

\newcommand{\one}{\leavevmode\hbox{\small1\normalsize\kern-.33em1}}

\def\be{\begin{equation}}
\def\ee{\end{equation}}
\def\ben{\begin{eqnarray}}
\def\een{\end{eqnarray}}
\def\eea{\end{array}}
\def\bea{\begin{array}}

\newcommand{\bei}{\begin{itemize}}
\newcommand{\eei}{\end{itemize}}

\renewcommand{\emph}[1]{\textbf{#1}}

\newcommand{\eqnref}[1]{(\ref{#1})}

\theoremstyle{plain}
\newtheorem{thm}{Theorem}

\newtheorem{lem}{Lemma}
\newtheorem{fakt}{Fact}
\newtheorem{prop}[thm]{Proposition}
\newtheorem{cor}[thm]{Corollary}

\newtheorem{defn}[thm]{Definition}

\theoremstyle{definition}

\theoremstyle{remark}


%
\begin{document}
\title{Completely entangled subspaces of entanglement depth $k$}
\author{Maciej Demianowicz}
\affiliation{Institute of Physics and Applied Computer Science, Faculty of Applied Physics and Mathematics, Gda\'nsk University of Technology,
Narutowicza 11/12, 80-233 Gda\'nsk, Poland}
\author{Kajetan Vogtt}
\affiliation{Center for Theoretical Physics, Polish Academy of Sciences, Aleja Lotnik\'ow 32/46, 02-668 Warsaw, Poland}
\affiliation{Faculty of Physics, University of Warsaw, Pasteura 5, 02-093 Warsaw, Poland}
\author{Remigiusz Augusiak}
\affiliation{Center for Theoretical Physics, Polish Academy of Sciences, Aleja Lotnik\'ow 32/46, 02-668 Warsaw, Poland}

\date{\today}

\begin{abstract}
   We introduce a class of entangled subspaces: completely entangled subspaces of entanglement depth $k$ ($k$-CESs). These are subspaces of multipartite Hilbert spaces containing only pure states with an entanglement depth of at least $k$. We present an efficient construction of $k$-CESs of any achievable dimensionality in any multipartite scenario. Further, we discuss the relation between these subspaces and unextendible product bases (UPBs). In particular, we establish that there is a non-trivial bound on the cardinality of a UPB whose orthocomplement is a $k$-CES. Further, we discuss the existence  of such UPBs  for qubit systems.
\end{abstract}

\maketitle
\section{Introduction}
Entanglement has long been known to be an enabling resource for a variety of  
information processing protocols such as for instance quantum communication \cite{144km} or quantum metrology \cite{quantum-metrology0,quantum-metrology}.
The ongoing rapid development of quantum network technologies requires a thorough theoretical characterization of its multipartite facet. It is a very challenging problem as the description of quantum systems with multiple nodes is far more complex than that of systems with only two subsystems \cite{three-qubits,four-qubits}. The topic has attracted much attention over the recent years and many profound results touching its different aspects have been reported in the literature (see, e.g., \cite{be-activation,negative,black-hole,tensor-rank,causal,activation-gme}). 
Noteworthy, the theoretical advancements go largely in parallel with a remarkable progress in the experimental domain (see, e.g., \cite{multiphoton,exp-multipartite,exp1,exp2,exp3}). Still, despite all the efforts, many aspects of many-body entanglement remain insufficiently explored and further research to better understand it is highly necessary.  This work is in line with this important trend.

A particular useful tool for the characterization of entanglement in multipartite systems is the entanglement depth \cite{depth}, which indicates how many particles in a given ensemble are genuinely entangled. The concept has found applications for example in the domain of cold gases \cite{bose1,bose2,bose3}.
The notion of entanglement can also be meaningfully considered for subspaces in addition to individual states. The most general notion is that of completely entangled subspaces (CESs), which are those subspaces that contain only pure entangled states 
\cite{wallach,partha,bhat,walgate}.  More specific classes of CESs considered in the literature are, e.g., subspaces containing only states with bounded tensor rank ($r$-entangled subspaces) \cite{cubitt,johnston-r}, or
genuinely entangled subspaces \cite{partha,upb-to-ges}, which are particularly interesting in the multipartite scenario as they contain only genuinely entangled states---the most resourceful states in this framework.
Interesting examples of genuinely entangled subspaces are for instance those spanned by the stabilizer error correction codes, including
the five-qubit \cite{5qubit} and toric  \cite{Kitaev} codes.

Entangled subspaces play an important role in quantum information science making them objects of intrinsic interest (see, e.g., \cite{makuta,john-hierarchia,john-ogolna}, for some recent results). First of all, they constitute an invaluable tool in the theory of entanglement because they can be readily utilized for a construction of mixed entangled states owing to the simple fact that any state supported on an entangled subspace is necessarily entangled.  
An important class of such constructed states comprises those built from unextendible product bases \cite{upb} and having positive partial transpose (PPT). These states share the appealing property of being bound entangled, i.e., their entanglement cannot be distilled into singlets with local operations and classical communication (LOCC). In fact, this was the first general construction of multipartite bound entangled states harnessing the notion of completely entangled subspaces.
Second, it was shown that almost all subspaces are completely entangled provided they are not too large, i.e., their dimension does not exceed the maximal permissible one for CESs  \cite{walgate}.
This result was utilized to prove that almost all sets of states 
(separable or entangled) are
locally unambiguously distinguishable  if the number of elements in the set is not larger than the difference between the dimension of the whole system and the maximal dimension of a CES in the given setup \cite{chefles, walgate}.
Moreover, entangled subspaces, such as the antisymmetric one and that introduced in Ref. \cite{partha}, have been exploited to provide  counterexamples to the additivity of the minimum output R\'enyi entropy of quantum channels \cite{grudka,studzinski}.
Their practical applicability further adds to their significance.
It has been recognized that they are relevant in the dynamically growing field of quantum error correction \cite{gour,scott}. A recent development in the research exploring this connection revealed that there exists a certain trade-off relating the larger capability of a code to correct errors to higher entanglement of the code space \cite{huber}. Furthermore, a link between entangled subspaces and self-testing was established \cite{max-nonlocal,dev-ind-ges,Owidiusz,Frerot} pointing to their use, e.g., in cryptography.

Our current contribution is to marry the notions of the entanglement depth and entangled subspaces.
Precisely, we introduce completely entangled subspaces of entanglement depth $k$ ($k$-CESs), as those subspaces containing only pure states whose entanglement depth is at least $k$. We find the maximal dimensions of such subspaces and present their universal construction  from sets of non-orthogonal product vectors in any multipartite setup. Further, we  discuss the relation between $k$-CESs and unextendible product bases (UPBs). This includes the derivation of a non-trivial bound on the cardinalities of UPBs leading to $k$-CESs and a discussion with a positive conclusion about their existence.

\section{Preliminaries}\label{preliminaries}
In this section we introduce  the relevant notions and terminology.

We consider quantum states defined on  $n$-partite ($n\ge 3 $) Hilbert spaces 
\begin{equation}
    \mathcal{H}_{d_1\dots d_n} :=\bigotimes_{i=1}^n \mathcal{H}_i= 
    \mathbb{C}^{d_1} \otimes \mathbb{C}^{d_2}\otimes \ldots \otimes \mathbb{C}^{d_n}.
\end{equation}
 We will mainly focus on the scenario with equal local dimensions $d_i=d$ in which case the space will be denoted as $\mathcal{H}_{d^n}$.  Single-partite subsystems (parties) are denoted $A_1,A_2,\dots, A_n:=\textbf{A}$. 
 A division of the set of parties into $K$ non-overlapping non-empty sets $S_i$ such that $\bigcup_i S_i=\textbf{A}$ is called a $K$-partition and is denoted $S_1|\dots|S_K$.  

An $n$-partite state  $\ket{\Phi}  \in \mathcal{H}_{d_1\dots d_n}$ is called {\it $k$-producible} if it can be written as 
\begin{equation}
\ket{\Phi}=\ket{\phi_1}\otimes \ldots\otimes \ket{\phi_M},
\end{equation}
where $\ket{\phi_i}$'s are at most $k$-partite states.   
If a $k$-producible state is not $(k-1)$-producible at the same time it is said to have {\it entanglement depth $k$} \cite{depth,producible}. States with entanglement depth $k=1$ are named {\it fully product} because they are products of pure states on individual subsystems $A_i$. All other states with $k \ge 2$ are said to be entangled and those with entanglement depth equal $n$, in particular, are called {\it genuinely multipartite entangled (GME)}. These are the states which cannot be written as a product for any bipartition of the parties. 
Excellent examples of GME states are the Greenberger-Horne-Zeilinger (GHZ) states \cite{GHZ}, 
\begin{equation}\label{GHZ}
    \ket{\mathrm{GHZ}_n}=\frac{1}{\sqrt{2}}(\ket{0}^{\otimes n}+\ket{1}^{\otimes n}),
\end{equation}
which belong to a general class of GME states called graph states  \cite{graphstates}.

Let us now recall the concept of entangled subspaces. A subspace of a multipartite Hilbert space $\mathcal{H}_{d_1\dots d_n}$ is called \textit{completely entangled (CES)} if all pure states belonging to it are entangled \cite{wallach,partha}. Importantly, entanglement of these states can be of any kind, including  only bipartite. In the extreme case of all states from a subspace being  GME one deals with genuinely entangled subspaces (GESs) \cite{upb-to-ges}. This is the case for instance for the antisymmetric subspace.

Entangled subspaces are closely related to the so-called unextendible product bases and we will extensively exploit this connection in later parts of the paper. Below we remind the basic necessary facts about the discussed notions.

  Consider 
  a set of fully product mutually orthogonal $n$-partite vectors from $\mathcal{H}_{d_1\dots d_n}$
\begin{equation}\label{dupa}
 \mathcal{B}:=\left\{\ket{\psi^{(j)}}=\bigotimes_{i=1}^n \ket{\varphi_i^{(j)}} \right\}_{j=1}^m, 
\end{equation}
where $\ket{\varphi_i^{(j)}} \in \mathbb{C}^{d_i}$.  
 $\mathcal{B}$ 
 is called an \textit{unextendible product basis (UPB)} iff it does not span the total Hilbert space $\mathcal{H}_{d_1\dots d_n}$ and there does not exist a fully product vector in the subspace complementary to $\mathrm{span}\:\mathcal{B}$ \cite{upb,big-upb}.  We will refer to the number of elements in $\mathcal{B}$ as the size or cardinality of it.

Clearly, by the very definition of a UPB, its complementary subspace is a CES.
To illustrate this general link with a simple example let us consider 
the following set of three-qubit fully product vectors 
 introduced in Ref. \cite{upb}:
%
\begin{equation}\label{Shifts}
\mathcal{S}:=\{\ket{000},\ket{1+-},\ket{-1+},\ket{+-1}\},
\end{equation}
where $\ket{\pm}=(\ket{0}\pm\ket{1})/\sqrt{2}$. One can easily verify that $\mathcal{S}$ is indeed a UPB and only entangled vectors belong to $(\mathrm{span} \:\mathcal{S})^{\perp}$, which itself is thus a CES. Notably, the latter subspace is not a GES because it contains biproduct vectors, e.g., $\ket{1}\otimes \ket{\phi}$, where $\ket{\phi}$ is an entangled two-qubit vector orthogonal to $\ket{+-}$, $\ket{1+}$, and $\ket{-1}$.

Concluding this section, we give two results 
regarding the unextendibility of a set of product vectors that will be our main tools in later parts. The first is the following observation.
\begin{fakt}\label{crucial} \cite{upb,big-upb}
Consider a set of $n$-partite product vectors $\mathcal{B}$ defined in Eq. (\ref{dupa}). 
%
%
There exists a fully product vector orthogonal to $\mathcal{B}$
if and only if there exists a partition of $\mathcal{B}$ into $n$ disjoint subsets: $\mathcal{B}=\mathcal{B}_1 \cup \dots \cup \mathcal{B}_n$, such that for all subsets $\mathcal{B}_i$ ($i=1,2,\dots,n$), the local vectors $\{\ket{\varphi_{i}^{(j)}}: \ket{\psi^{(j)}}\in \mathcal{B}_i\}$ do not span the corresponding local Hilbert spaces $\mathcal{H}_i=\mathbb{C}^{d_i}$.
\end{fakt}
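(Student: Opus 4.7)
The plan is to exploit the fact that for two fully product vectors $\ket{\phi} = \bigotimes_i \ket{\phi_i}$ and $\ket{\psi^{(j)}} = \bigotimes_i \ket{\varphi_i^{(j)}}$ the inner product factorizes as $\braket{\phi}{\psi^{(j)}} = \prod_i \braket{\phi_i}{\varphi_i^{(j)}}$, and therefore vanishes if and only if at least one of the local factors $\braket{\phi_i}{\varphi_i^{(j)}}$ is zero. This single observation drives both directions of the equivalence.

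For the forward implication I would assume such a $\ket{\phi}$ exists and build the partition by a direct assignment. For each $j$ the product above forces at least one index $i(j) \in \{1,\dots,n\}$ with $\braket{\phi_{i(j)}}{\varphi_{i(j)}^{(j)}} = 0$; I would fix any such choice and place $\ket{\psi^{(j)}}$ into $\mathcal{B}_{i(j)}$. Then every local vector associated with $\mathcal{B}_i$ lies in the hyperplane $\{\ket{\chi} : \braket{\phi_i}{\chi}=0\}$ of dimension $d_i-1$, so the set $\{\ket{\varphi_i^{(j)}} : \ket{\psi^{(j)}} \in \mathcal{B}_i\}$ cannot span $\mathbb{C}^{d_i}$. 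Possibly empty $\mathcal{B}_i$ are harmless, as they also trivially fail to span.

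For the converse, starting from the given partition I would pick, for each $i$, a nonzero $\ket{\phi_i} \in \mathbb{C}^{d_i}$ orthogonal to every local vector coming from $\mathcal{B}_i$; such a $\ket{\phi_i}$ exists precisely because those vectors do not span $\mathbb{C}^{d_i}$. The vector $\ket{\phi} := \bigotimes_i \ket{\phi_i}$ is then fully product, and for any $j$ the factor belonging to the (unique) part $\mathcal{B}_i$ containing $\ket{\psi^{(j)}}$ already makes $\braket{\phi}{\psi^{(j)}}$ vanish, giving orthogonality to all of $\mathcal{B}$.

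I do not foresee a genuine obstacle here; the argument is essentially a repackaging of the elementary fact that a product inner product is zero exactly when one of its factors is zero. The only mild subtlety is the non-uniqueness of the assignment $j \mapsto i(j)$ in the forward direction, which is resolved simply by committing to an arbitrary fixed choice before defining the $\mathcal{B}_i$.
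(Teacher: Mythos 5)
Your argument is correct and complete: both directions follow from the factorization $\braket{\phi}{\psi^{(j)}}=\prod_i\braket{\phi_i}{\varphi_i^{(j)}}$, the forward direction by assigning each $\ket{\psi^{(j)}}$ to a part where a local factor vanishes (so the corresponding local vectors sit in a proper hyperplane), and the converse by choosing each $\ket{\phi_i}$ in the nontrivial orthocomplement of the local span. The paper states this fact without proof, citing the original references, and your reasoning is precisely the standard argument given there; your remark that empty parts trivially fail to span handles the only minor subtlety.
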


This elegant result often serves as a basis for the constructions of UPBs and it will be used by us for a direct check of whether one of the considered bases possesses certain property of unextendibility. Its immediate consequence  is a lower bound on the cardinalities of unextendible sets:
\begin{equation}\label{condition}
 m \geq \sum_{i=1}^n(d_i-1) + 1.
\end{equation}

A stronger version of Fact \ref{crucial} follows as a corollary. 

\begin{cor}\label{corollary}
Consider a set of product vectors $\mathcal{B}$ defined in 
Eq. (\ref{dupa}) satisfying the condition \eqref{condition}. 
If for every $i$  any $d_i$-tuple of local vectors  $\{\ket{\varphi_i^{(j)}}\}_j$ spans the corresponding local Hilbert space $\mathcal{H}_i=\mathbb{C}^{d_i}$, then the set $\mathcal{B}$ is unextendible with product vectors. 
\end{cor}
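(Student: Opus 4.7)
The plan is to prove the corollary by contraposition, using Fact \ref{crucial} as the characterization of extendibility. Suppose towards a contradiction that $\mathcal{B}$ is extendible, i.e., there exists a fully product vector orthogonal to all of $\mathcal{B}$. Then, by Fact \ref{crucial}, there is a partition $\mathcal{B}=\mathcal{B}_1\cup\dots\cup\mathcal{B}_n$ into $n$ disjoint subsets such that for each $i$ the local vectors $\{\ket{\varphi_i^{(j)}}:\ket{\psi^{(j)}}\in\mathcal{B}_i\}$ fail to span $\mathbb{C}^{d_i}$.

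The key step is to bound $|\mathcal{B}_i|$ using the spanning hypothesis. Observe that if $|\mathcal{B}_i|\geq d_i$, then among the local vectors coming from $\mathcal{B}_i$ we can select a $d_i$-tuple, and by assumption this tuple spans $\mathcal{H}_i=\mathbb{C}^{d_i}$; a fortiori the full collection indexed by $\mathcal{B}_i$ spans $\mathcal{H}_i$, contradicting the defining property of the partition. Therefore $|\mathcal{B}_i|\leq d_i-1$ for every $i$.

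Summing over $i$ gives
\begin{equation}
m=\sum_{i=1}^n |\mathcal{B}_i|\leq \sum_{i=1}^n (d_i-1),
\end{equation}
which directly contradicts the assumed cardinality bound \eqref{condition}, namely $m\geq \sum_{i=1}^n (d_i-1)+1$. Hence no such partition can exist, and invoking Fact \ref{crucial} in the reverse direction, we conclude that $\mathcal{B}$ admits no fully product vector in its orthogonal complement, i.e., $\mathcal{B}$ is unextendible with product vectors.

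The proof is essentially a pigeonhole argument on top of Fact \ref{crucial}, so I do not anticipate a significant obstacle; the only minor subtlety is verifying that the spanning hypothesis is strong enough to force $|\mathcal{B}_i|\leq d_i-1$ rather than just excluding the case $|\mathcal{B}_i|=m$, and this is handled by noticing that the assumption applies to \emph{every} $d_i$-tuple drawn from the $i$-th local component.
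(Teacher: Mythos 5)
Your proof is correct and is exactly the intended argument: the paper states the corollary without writing out a proof, but the standard derivation from Fact \ref{crucial} is precisely your pigeonhole step forcing $|\mathcal{B}_i|\le d_i-1$ and contradicting condition \eqref{condition}. The handling of the subtlety you flag (that the hypothesis applies to \emph{every} $d_i$-tuple, so $|\mathcal{B}_i|\ge d_i$ already yields spanning) is right, and the argument also covers empty blocks $\mathcal{B}_i$ trivially.
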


In fact, for sets with the minimal cardinality $\sum_i(d_i-1)+1$ the constituent vectors must necessarily have the property of full local spanning on subsystems given in the corollary above.

Importantly, 
both results related to unextendibility, Fact \ref{crucial} and Corollary \ref{corollary}, are also applicable to sets of non-orthogonal vectors.

\section{Completely entangled subspaces of entanglement depth $k$ ($k$-CES)}\label{ka-cesy}

Let us move to the main body of the paper.
We introduce a class of entangled subspaces defined through the entanglement depth of vectors belonging to them. We propose the following.

 \begin{defn}
A subspace $S \subset \mathcal{H}_{d_1\dots d_n}$ is called a completely entangled subspace of entanglement depth $k$ ($k$-CES) if the entanglement depth of any vector belonging to $\mathcal{S}$ is at least $k$. Equivalently, a $k$-CES is void of $(k-1)$-producible states.
 \end{defn}

Let us remark that for $k=2$ the above definition 
recovers the definition of  completely entangled subspaces, whereas for $k=n$ it recovers the definition of genuinely entangled subspaces.

Before delving into a characterization of $k$-CESs, let us give a few simple examples of them, which will allow for a quick grasp of the notion. 

As the first example let us take the above-discussed 
CES complementary to the UPB given in Eq. \eqref{Shifts}. 
Within our  terminology 
we can now say that it is a $2$-CES (CES), but it is not a $3$-CES (GES) because, as discussed, the complementary subspace contains a three-qubit vector with an entanglement depth of $2$.

For the second example, we turn our attention to quantum error correction.
Consider the celebrated nine-qubit Shor's code \cite{shor} described by the vectors
\begin{align}
    \ket{\psi_0}&=\ket{\mathrm{GHZ}_3}\otimes\ket{\mathrm{GHZ}_3}
    \otimes \ket{\mathrm{GHZ}_3}, \\
%
%
    \ket{\psi_1}&=\ket{\overline{\mathrm{GHZ}}_3}\otimes\ket{\overline{\mathrm{GHZ}}_3}
    \otimes \ket{\overline{\mathrm{GHZ}}_3},
\end{align}
where $\ket{\overline{\mathrm{GHZ}}_3}=(1/\sqrt{2})(\ket{000}-\ket{111})$ and
$\ket{\mathrm{GHZ}_3}$ is defined in Eq. (\ref{GHZ}). Now, the subspace $\mathrm{span}\{\ket{\psi_0},\ket{\psi_1}\}$ is a $3$-CES because the entanglement depth of both $\ket{\psi_0}$ and $\ket{\psi_1}$ is $3$, and, at the same time, 
any linear combination of these vectors is GME, i.e., it has an entanglement depth equal to $9$.

As the final example we give an elementary general construction. 
 With this aim take a GES of a $k$-partite Hilbert space. Now, take a tensor product of all the vectors from this subspace with an arbitrary, possibly different for each vector, fully product $(n-k)$-partite vector. This clearly results in a $k$-CES of an $n$-partite Hilbert space.
This rather trivial method does not lead to $k$-CESs of the maximal dimensions for $k<n$ as can be seen from dimension consideration (see Appendix \ref{monotonicznosc-D}) and in Section \ref{konstrukcja} we will provide a more elaborate construction which does this job. However, it is noteworthy that if we choose the otherwise arbitrary vectors to be the same for each of the vectors from a GES, the resulting subspace will be a $k$-CES with all vectors belonging to it having an entanglement depth equal exactly to $k$.

One of the central problems of the theory of entangled subspaces is the problem of verifying whether a given subspace is entangled and further determining  to which class it belongs. It is worth noting that some tools have been developed in the literature so far, which could also be applied for the case of $k$-CESs. Specifically, the simple criterion based on the entanglement of basis vectors from Ref. \cite{bound-njp} is directly applicable and the hierarchical method from Ref. \cite{john-hierarchia} can be readily adapted. Another vital problem in the area is the quantification of subspace entanglement (see \cite{new-ent-sub} for recent results). Here, also, we find certain methods which can be used in the current case, e.g., the semidefinite programming bounds of Ref. \cite{ent-of-ges}.

Let us now investigate how big---in terms of their dimensions---$k$-CESs can be. 
 The theory of completely entangled subspaces 
helps us address this question successfully.  
For simplicity, we concentrate on the case of equal local dimensions in our derivation but a properly adjusted argument is 
applicable in the general case too. We have the following proposition.

\begin{prop}
The maximal attainable dimension of a $k$-CES
of $\mathcal{H}_{d^n}$ equals
\begin{align}
\label{kces}
D_{k-CES}^{max} =  d^n-\left(t d^{k-1} + d^{n-(k-1)t}-t\right),
\end{align}
where
\begin{equation}\label{t}
t=\left\lfloor \frac{n}{k-1} \right\rfloor.
\end{equation}
\end{prop}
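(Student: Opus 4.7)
The plan is to establish matching upper and lower bounds on $\dim S$ for a $k$-CES $S\subset\mathcal{H}_{d^n}$. For the upper bound, I would first re-express the defining property in partition language: $S$ is a $k$-CES if and only if, for every partition $P=S_1|\dots|S_M$ of $\textbf{A}$ with $|S_i|\le k-1$, $S$ contains no vector that is fully product across $P$. Regrouping the parties according to $P$ turns $\mathcal{H}_{d^n}$ into an $M$-partite Hilbert space with local dimensions $D_i=d^{s_i}$, where $s_i=|S_i|$, and the $P$-product vectors become the Segre variety of that coarse-grained system.

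For a single partition $P$ I would then invoke the Parthasarathy/Wallach bound, which states that the maximal dimension of a subspace containing no fully product state of a $K$-partite Hilbert space with local dimensions $D_1,\dots,D_K$ equals $\prod_i D_i-\sum_i D_i+K-1$. Applied to the coarse-grained system this gives
\begin{equation}
\dim S\le d^n-\sum_{i=1}^M d^{s_i}+M-1.
\end{equation}
Since $S$ must avoid the product structure of \emph{every} allowed partition, the upper bound is the minimum of the right-hand side over all partitions $P$ with $s_i\le k-1$ and $\sum_i s_i=n$.

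Next I would identify the partition that saturates this minimum. Setting $h(P):=\sum_i d^{s_i}-M+1$, the goal is to maximise $h$. The key one-line inequality is a merging argument: combining two blocks of sizes $a,b\ge 1$ into a single block of size $a+b$ changes $h$ by $d^{a+b}-d^a-d^b+1$, which is at least $1$ because $d^{a+b}\ge 2d^{\max(a,b)}\ge d^a+d^b$ for $d\ge 2$. Consequently, whenever two blocks can be fused without violating the $\le k-1$ constraint, $h$ strictly increases, so the maximiser has as few blocks as possible: $t=\lfloor n/(k-1)\rfloor$ blocks of size $k-1$ together with, when $r:=n-t(k-1)>0$, one additional block of size $r$. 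Substituting $\sum_i d^{s_i}=t d^{k-1}+d^r$ and $M=t+\mathbbm{1}[r>0]$ into the bound and using the convention $d^0=1$ (which makes the $r=0$ case automatic) yields exactly the right-hand side of \eqref{kces}.

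For the matching lower bound, I would invoke the explicit construction carried out in Section \ref{konstrukcja}, which produces a $k$-CES of dimension $D_{k\text{-}\mathrm{CES}}^{\max}$ from a suitable set of non-orthogonal product vectors, thereby showing achievability. The main obstacle is precisely this construction: the dimension count for the upper bound is a straightforward application of known CES bounds once the optimal partition is identified, but exhibiting a subspace of the maximal dimension requires the more delicate combinatorial argument deferred to the later section.
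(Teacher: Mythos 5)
Your overall route is the same as the paper's: reinterpret the $k$-CES condition as being a CES with respect to every coarse-grained partition into blocks of size at most $k-1$, apply the Wallach--Parthasarathy bound $\prod_i D_i - \sum_i(D_i-1)-1$ to each coarse-grained system, optimize over partitions, and then appeal to achievability (the paper cites genericity of random subspaces \`a la \cite{cubitt}, while you defer to the explicit Vandermonde construction of Section \ref{konstrukcja}; both are acceptable closures of that step).

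There is, however, a genuine gap in your optimization step. Your merging inequality $d^{a+b}-d^a-d^b+1\ge 1$ is correct, but it only shows that an optimal partition is \emph{merge-stable} (no two blocks can be fused without exceeding $k-1$); it does not show that the optimum is $t$ blocks of size $k-1$ plus one block of size $r=n-t(k-1)$. Merge-stability does not even force the minimal number of blocks, and the minimal number of blocks does not force the claimed block sizes. Concretely, take $n=12$, $d=2$, $k=6$ (so $k-1=5$, $t=2$, $r=2$): the partition $(4,4,4)$ is merge-stable and has the minimal possible number of blocks, $\lceil 12/5\rceil=3$, yet gives $h=3\cdot 16-3+1=46$, strictly less than $h=2\cdot 32+4-3+1=66$ for $(5,5,2)$. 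So your argument leaves a whole family of merge-stable candidates uncompared. To close this you need a second, ``transfer'' step: moving one party from a smaller block to a larger block (while the larger stays $\le k-1$) increases $\sum_i d^{s_i}$ by $(d-1)(d^{b}-d^{a-1})\ge 0$ for $a\le b$, which pushes any fixed-length partition toward $(k-1,\dots,k-1,r)$. The paper packages exactly this into a single majorization statement: the vector $(k-1,\dots,k-1,\,n-t(k-1),0,\dots,0)$ majorizes every admissible $(n_1,\dots,n_r)$, and $x\mapsto d^x-1$ is strictly convex, so Schur-convexity (Lemmas \ref{pierwszy-lemat} and \ref{drugi-lemat} in Appendix \ref{max-dim}) yields the maximum in one stroke. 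With that repair your proof is correct and essentially identical to the paper's.
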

\begin{proof}
By definition, a $k$-CES $\mathcal{S} \subset \mathcal{H}_{d^n}$ does not contain $(k-1)$-producible pure states. This implies that $\mathcal{S}$ is a CES in any $r$-partite Hilbert space
\begin{equation}\label{hilbert-new}
    (\mathbb{C}^d )^{\otimes n_1} \otimes (\mathbb{C}^d )^{\otimes n_2} \otimes \ldots \otimes (\mathbb{C}^d )^{\otimes n_r}, 
\end{equation}
%
where $n_i \le k-1$ and  $i=1,2,\dots,r$, 
corresponding to an $r$-partition $S_1 | S_2|\dots |S_r$ of the parties, where the size of each subset $S_i$ is $|S_i|=n_i$.

It was shown in Refs. \cite{wallach,partha} that the maximal attainable dimension of a CES of $\mathcal{H}_{d_1\dots d_n}$
is given by 
%
$\prod_{i=1}^{n} d_i - [ \sum_{i=1}^n (d_i-1) + 1$] (notice that the term in the square bracket happens to match the right-hand side of Eq. (\ref{condition})).
%
Applying this result to \eqref{hilbert-new}, we obtain
\begin{equation} \label{do-minimalizacji}
 d^n-\left[\sum_{i=1}^{r}(d^{n_i}-1) + 1 \right],
\end{equation}
%
which must further be minimized (equivalently, the term in the square brackets must be maximized) over $n_i$'s to obtain an upper bound on the maximal dimension of a $k$-CES. This is achieved for $n_i=k-1$, $i=1,2,\dots,t$, and $n_{t+1}=n-t(k-1)$, where $t=\left\lfloor n/(k-1)\right\rfloor$ (see Appendix \ref{max-dim}). The minimal value of \eqref{do-minimalizacji} is thus 
\begin{align}
d^n-\left(t d^{k-1} + d^{n-(k-1)t}-t\right).
\end{align}
The choice above corresponds to the partitions of the parties as follows (we assume that for $k-1 | n$ we have $t$-partitions)
\begin{equation} \label{new-hilbert}
    (\mathbb{C}^{d^{k-1}} )^{\otimes t} \otimes \mathbb{C}^{d^{n-(k-1)t}}.
\end{equation}

Further, generic subspaces are known to saturate bounds on the dimensions of entangled subspaces (cf. \cite{cubitt}), meaning that a random subspace of a proper dimension will be a $k$-CES. 
\end{proof}

The maximal dimension \eqref{kces} is a strictly decreasing function of $k$ for given $d$ and $n$ (see Appendix \ref{monotonicznosc-D}). For $k=2$ and $k=n$, it obviously recovers the known maximal dimensions of CESs and GESs, respectively, which are explicitly given by
\begin{equation}
    D_{2-CES}^{\max}=d^n-n(d-1) -1
\end{equation}
and
\begin{equation}
    D_{n-CES}^{\max}=(d^{n-1}-1)(d-1).
\end{equation}

\section{Universal construction of $k$-CESs} \label{konstrukcja}

While random subspaces do have significance in quantum information theory (see, e.g., Ref. \cite{Hayden}), it is the explicit constructions, which often help one analyze these objects in more detail, for example, to compute entanglement measures of states supported on them. In this section, we provide a universal analytical construction of $k$-CESs, that is, we introduce a method for building $k$-CESs of any dimension for any combination of the numbers $k$, $n$, and $d$.  With this aim we utilize the approach put forward recently in \cite{DemianowiczQuantum}, which was originally designed for the construction of GESs. It turns out that it is versatile enough to be translated directly to the current case. 

The core idea of the construction is to use a set of product vectors for which the full local spanning of Corollary \ref{corollary} is apparent for properly chosen $r$-partitions of the parties [cf. Eq. \eqref{hilbert-new}]. Consequently, the subspace complementary to the set will be automatically inferred to be void of $(k-1)$-producible states, as needed. The desired properties are held by the Vandermonde vectors:
%
\begin{align}\label{vandermonde-vector}
\ket{v_{d^n}(x)}_{\textbf{A}}:=&\left(1,x,x^2,\dots,x^{d^n-1}\right)^T_{\textbf{A}} \nonumber \\
=&\bigotimes_{m=1}^{n} \left( \sum_{s_m=0}^{d-1} x ^ {q_{m,s_m}}\ket{s_m} \right)_{A_m},
\end{align}
where
\begin{equation}\label{q}
q_{m,s_m}=s_m d^{n-m}.
\end{equation}
Crucially, the Vandermonde matrices, that is matrices with rows being the Vandermonde vectors, 
\begin{equation}
    V_{K,L}
    = \sum_{k=0}^{K-1} \outerp{k}{v_{L}(x_{k})},
\end{equation}
with arbitrary $K$ and $L$,
are totally positive (i.e., all their minors are strictly positive), whenever the nodes $\{x_i\}_i$ obey the relation 
%
\begin{equation}\label{Order}
0<x_0 <x_1 < \dots < x_{K-1}.     
\end{equation}

Let us now consider the following set of linearly independent vectors
\begin{align} \label{nupb-k-ces}
 \left\{ \ket{v_{d^n}(x_i)}_{\textbf{A}}\right\}_{i=0}^{K-1}, 
\end{align}
where the number $K$ satisfies the following inequality 
\begin{equation}
    K \ge t d^{k-1}+ d^{ n-t(k-1)}-t,
\end{equation}
which follows from 
Eq. \eqref{kces}, and 
the nodes $x_i\in\mathbbm{R}$ are ordered as
in Eq. (\ref{Order}).

The claim is that the vectors in Eq. \eqnref{nupb-k-ces} are not extendible with vectors of the entanglement depth less than or equal to $k-1$, 
i.e., the subspace orthogonal to their span is a $k$-CES. The proof is almost immediate. We write
\begin{equation}
  \ket{v_{d^n}(x_i)}_{\textbf{A}} = \ket{\varphi_1^{(i)}}_{S_1}\otimes \ket{\varphi_2^{(i)}}_{S_2} \otimes \ldots \otimes \ket{\varphi_r^{(i)}}_{S_r}
\end{equation}
for an $r$-partition $S_1 | S_2|\dots |S_r$ of the parties corresponding to \eqref{hilbert-new}, in particular \eqref{new-hilbert}. Consider matrices with rows being local vectors on $S_j$'s,
\begin{equation} \label{lokalne-macierze}
    \mathfrak{S}_j= \sum_{k=0}^{K-1} \outerp{k}{\varphi_j^{(k)}}, \quad j=1,2,\dots,r.
\end{equation}

    From the total positivity of Vandermonde matrices with positive nodes (see above) it follows that $\mathfrak{S}_j$'s are also totally positive. This further means that for every $j$ any set of $d^{n_j}$ vectors $\{\ket{\varphi_j^{(i)}}\}_i$ spans the Hilbert space $(\mathbb{C}^d)^{\otimes n_j}$ of $S_j$.
This is the case for any $r$-partition with  $n_i \le k-1$.  
By Corollary \ref{corollary}, we conclude that indeed there does not exist a $(k-1)$-producible
 state in the orthocomplement of the span of vectors \eqref{nupb-k-ces}, i.e., the resulting subspace is a $k$-CES. Its dimension is $d^n-K$, which for the smallest $K$ corresponds to the maximal dimension given in Eq. \eqref{kces}. Explicit (non-orthogonal) bases for such constructed subspaces were given in \cite{DemianowiczQuantum}.

Similar to \cite{DemianowiczQuantum} one could also use discrete Fourier transform matrices in the construction. 
The  method presented here can be viewed as a construction of ''non-orthgonal'' $\mathrm{UPB}_{k-1}$'s (see Section \ref{ces-upb}), whose elements are not mutually orthogonal.

Observe that the construction is flexible in that for a given  $k$-CES we can always increase $k$ by adding some number of Vandermonde vectors with properly chosen nodes to the construction set \eqref{nupb-k-ces}.
For example, consider an eleven-dimensional $2$-CES, or, simply, a CES in the standard terminology, in the system of four qubits. Its construction requires five Vandermonde vectors. If we now add two more vectors to \eqref{nupb-k-ces} we will obtain a nine-dimensional 3-CES  (in Appendix \ref{example-k-ces} we give an example of such a subspace). If we  again add two vectors we will obtain a seven-dimensional 4-CES, i.e. a GES, of the maximal permissible dimension. Further enlargement of the set of Vandermonde vectors leads to GESs of smaller dimensions.

\section{$k$-CES and unextendible product bases}\label{ces-upb}

In this section, we explore the link between $k$-CESs and UPBs.
This line of study is motivated by the important fact that UPBs lead directly to a general construction of multipartite states with undistillable entanglement \cite{upb}.

We are interested in a certain type of UPBs, namely those bases which are unextendible with vectors of the entanglement depth less than $k$ that thus lead to $k$-CESs. We thus propose the following.

\begin{defn}
    A UPB which is unextendible with $(k-1)$-producible vectors, i.e., states with an entanglement depth less than or equal to  $k-1$, is called a $(k-1)$-unextendible product basis, denoted $\mathrm{UPB}_{\mathrm{k-1}}$.
\end{defn}

For $k=2$ we have the standard UPBs (see Section \ref{preliminaries}), 
while for $k=n$--- we have genuinely unextendible product bases (GUPBs), that is UPBs unextendible even with biproduct vectors \cite{upb-to-ges,MD-PRA}, whose existence 
is currently unknown (see Refs. \cite{MD-PRA,gupb-kutrity,kiribela} for recent results related to this topic).

By definition, the orthocomplement of a $\mathrm{UPB}_{\mathrm{k-1}}$ is a $k$-CES.
Importantly, the reverse statement is not necessarily true even in the cases when
the dimension of a $k$-CES is such that  its complement may admit a $\mathrm{UPB}_{\mathrm{k-1}}$. 
In fact, the orthocomplement of a $k$-CES may also be a $k$-CES itself.

We observe that the theoretical minimal cardinality of a $\mathrm{UPB}_{\mathrm{k-1}}$ is [see Eq. \eqref{t} for the definition of $t$]
\begin{widetext}
\begin{equation}\label{m-min-kupb}\small
m^{theor.}_{min}= \begin{cases} d^n & d=2\; \mathrm{and\;} k=n,\\ t(d^{k-1}-1) + d^{n-(k-1)t} & (\mathrm{odd}\; d)\; \mathrm{or}\;[\mathrm{even\;}d\; \mathrm{and}\; \{( k-1|n\; \mathrm{and\;odd}\; t )\; \mathrm{or}\;(k-1 \nmid n\; \mathrm{and\; even\; } t)\}], 
	 \\
t(d^{k-1}-1) + d^{n-(k-1)t}+1 & \mathrm{otherwise}.
\end{cases} 
\end{equation}
\end{widetext}
These numbers correspond to the minimal cardinalities of UPBs \cite{alon-lovasz,feng,min-upb-johnston} for systems defined on Hilbert spaces as in Eq. \eqref{new-hilbert}. We will refer to these bounds as {\it trivial} and discuss their improvement in the next section. 

\subsection{Bound on cardinalities of $\mathrm{UPB}_{\mathrm{k-1}}$}\label{ces-upb-bound}

The trivial bounds from Eq. \eqref{m-min-kupb} can in many cases be strengthened if the internal product structure of partitions is appropriately taken into account.
The improvement can be achieved by a  generalization of  the technique recently introduced in Ref. \cite{MD-PRA} to lower-bound permissible cardinalities of UPB$_{\mathrm{n-1}}$, i.e., GUPBs. There, the pigeonhole principle was the key resource, here, its generalization naturally comes in handy. We have the following.

\begin{prop}\label{prop-bound}
The number of states $m$ in a $\mathrm{UPB}_{\mathrm{k-1}}$ is bounded as follows:
\begin{equation}\label{k-upb-bound}
       m \ge d^{k-1} +  (n-k+1) \left(\left\lfloor \frac{d^{k-1}-2}{k-1} \right\rfloor+1\right).
\end{equation}
\end{prop}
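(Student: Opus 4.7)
My plan is to exploit two structural features of a $\mathrm{UPB}_{\mathrm{k-1}}$: first, it remains an (orthogonal) UPB on every coarsening of the $n$-partite structure whose blocks have size at most $k-1$, so Fact \ref{crucial} and Corollary \ref{corollary} apply to every such coarsening; second, within each coarsening the product vectors still carry a fine internal structure that can be exploited combinatorially. The strategy is to extract the two summands of \eqref{k-upb-bound} separately: the baseline $d^{k-1}$ from a chosen coarse block of size $k-1$, and the per-singleton contribution from a pigeonhole argument generalising the one used for GUPBs in Ref.~\cite{MD-PRA}.

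First, I fix a subset $G\subset \textbf{A}$ with $|G|=k-1$ and denote the remaining parties by $A_{i_1},\dots,A_{i_{n-k+1}}$. Viewed on the coarse Hilbert space $\mathbb{C}^{d^{k-1}}\otimes(\mathbb{C}^d)^{\otimes(n-k+1)}$ associated with the partition $\pi\colon G\,|\,A_{i_1}|\dots|A_{i_{n-k+1}}$, every product vector is $(k-1)$-producible and hence cannot extend $\mathcal{B}$. Applying Fact \ref{crucial} to the allocation that assigns every vector of $\mathcal{B}$ to the $G$-block and leaves the singleton blocks empty immediately forces the $G$-local components of $\mathcal{B}$ to span $\mathbb{C}^{d^{k-1}}$; hence $m\ge d^{k-1}$, delivering the first summand of \eqref{k-upb-bound}.

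For each singleton $A_{i_j}$ I then argue that at least $\lfloor(d^{k-1}-2)/(k-1)\rfloor+1=\lceil(d^{k-1}-1)/(k-1)\rceil$ additional vectors must be present. Fixing $A_{i_j}$ and any hyperplane $W\subsetneq\mathbb{C}^d$, I partition $\mathcal{B}$ by placing the vectors whose $A_{i_j}$-component lies in $W$ into the $A_{i_j}$-block and the rest into the $G$-block. Since the $A_{i_j}$-side fails to span $\mathbb{C}^d$, Fact \ref{crucial} requires the $G$-local components of the complement to span $\mathbb{C}^{d^{k-1}}$. Now the $G$-block decomposes as $(\mathbb{C}^d)^{\otimes(k-1)}$ and its local vectors are still products, so such a product-vector spanning set must supply spanning local data on each of the $k-1$ internal parties of $G$. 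Pigeonholing the $d^{k-1}-1$ basis directions of $\mathbb{C}^{d^{k-1}}$ beyond a fixed reference over these $k-1$ internal factors then identifies one internal party that must absorb at least $\lceil(d^{k-1}-1)/(k-1)\rceil$ of them, each of which demands a distinct UPB vector whose $A_{i_j}$-component lies outside $W$.

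Since the singletons $A_{i_1},\dots,A_{i_{n-k+1}}$ constrain pairwise disjoint local factors, the extras attributed to different singletons can be chosen to overlap neither one another nor the baseline spanning set, so the contributions add and produce the claimed bound \eqref{k-upb-bound}. The main technical obstacle is the combinatorial step in the previous paragraph: I must verify that the pigeonhole applied to the internal $(k-1)$-fold product structure of $G$ yields the \emph{exact} ceiling $\lceil(d^{k-1}-1)/(k-1)\rceil$ rather than a weaker quantity, and that the per-singleton counts remain additive across the $n-k+1$ singletons. This combinatorial core is precisely the part that extends the GUPB argument of Ref.~\cite{MD-PRA} from $k=n$ to arbitrary $k\le n$.
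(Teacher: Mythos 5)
Your opening move is sound: viewing $\mathcal{B}$ on the coarse partition $G|A_{i_1}|\dots|A_{i_{n-k+1}}$ and allocating every vector to the $G$-block does force the $G$-local components to span $\mathbb{C}^{d^{k-1}}$, so $m\ge d^{k-1}$. (Note, though, that this baseline is already weaker than the trivial bound \eqref{m-min-kupb}, so the entire burden falls on the per-singleton extras.) That is where the proof breaks down, in two places. First, the ``combinatorial core'' is not a valid argument. Once Fact~\ref{crucial} forces the $G$-components of the vectors avoiding $W$ on $A_{i_j}$ to span $\mathbb{C}^{d^{k-1}}$, the only count this yields is that there are at least $d^{k-1}$ such vectors; there is no meaningful sense in which $d^{k-1}-1$ ``basis directions beyond a reference'' get distributed over the $k-1$ internal tensor factors of $G$, and an internal party ``absorbing'' $\lceil (d^{k-1}-1)/(k-1)\rceil$ of them does not produce that many distinct UPB vectors with $A_{i_j}$-component outside $W$. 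The denominator $k-1$ in \eqref{k-upb-bound} does not originate from the internal product structure of $G$. Second, the additivity over the $n-k+1$ singletons is only asserted (``the extras \dots can be chosen to overlap neither one another nor the baseline''); nothing prevents a single vector of $\mathcal{K}$ from simultaneously serving in the baseline spanning set and in the ``extra'' sets of several singletons, so the contributions cannot simply be summed.

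The paper obtains the decomposition you are aiming for by a mechanism that makes the disjointness automatic. Fix one reference vector $\ket{v_1}$ and assign each of the remaining $m-1$ vectors to a single site on which it is locally orthogonal to $\ket{v_1}$; this partitions $m-1$ objects into $n$ boxes. The generalized pigeonhole principle then produces $w=n-k+1$ sites $\textbf{A}_w$ collectively holding at least $s=w\left\lfloor \frac{m-1}{n}\right\rfloor+\min\left(w,\,m-1-n\left\lfloor\frac{m-1}{n}\right\rfloor\right)$ vectors, all orthogonal to $\otimes_{j\le w}\ket{u_j^{(1)}}$ on $\textbf{A}_w$; consequently the remaining $m-s$ vectors must span the $d^{k-1}$-dimensional space of the other $k-1$ sites, or else $\mathcal{K}$ extends with a vector of entanglement depth at most $k-1$. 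Solving $m-s\ge d^{k-1}$ yields exactly \eqref{k-upb-bound}: both the factor $n-k+1$ and the denominator $k-1$ come out of this single global pigeonhole, with each vector counted in exactly one box. To salvage your route you would need to replace the internal-factor pigeonhole and the asserted disjointness with an argument of this kind.
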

\begin{proof}
Consider a set of mutually orthogonal fully product vectors 
\begin{equation}
\mathcal{K} = \{\ket{v_i}_{\textbf{A}}\}_{i=1}^m, \quad\ket{v_i}_{\textbf{A}}= \bigotimes_{j=1}^{n} \ket{u_j^{(i)}}_{A_j}.
\end{equation}
We will show that if the number of elements in $\mathcal{K}$ is strictly smaller than the right-hand side of \eqref{k-upb-bound} then $\mathcal{K}$ is extendible with vectors of the entanglement depth $k-1$ or smaller, i.e., it is not a  $\mathrm{UPB}_{\mathrm{k-1}}$. 

Any two vectors from $\mathcal{K}$, $\ket{v_p}$ and $\ket{v_q}$, are orthogonal because their local vectors on at least one site $A_k$ are orthogonal, i.e., $\langle u_k^{(p)} | u_{k}^{(q)}\rangle =0$ for at least one value of index $k$.
By the generalized pigeonhole principle \cite{GPP} (see Appendix \ref{gpp}), we conclude that for any vector $\ket{v_i}$ there are $w$ sites such that the total number of vectors orthogonal to $\ket{v_i}$ on these sites is at least
\begin{equation}\label{def-s}
    s:=w \left\lfloor \frac{m-1}{n} \right\rfloor + \min \left(w, m-1-n \left\lfloor \frac{m-1}{n} \right \rfloor\right).
    \displaystyle
\end{equation}
Let us focus on one of the vectors from $\mathcal{K}$, e.g., $\ket{v_1}$. Assume that the said $s$ vectors which are orthogonal to $\ket{v_1}$ are $\mathcal{K}_1=\{\ket{v_2}, \ket{v_3}, \ldots ,\ket{v_{s+1}}\}$ with orthogonality holding on sites from
$\textbf{A}_w:=A_1,A_2,\cdots,A_w$. 
Since there exist a vector orthogonal to $\mathcal{K}_1$ on one of the sites from $\textbf{A}_w$, we immediately infer that 
the elements of $\mathcal{K}_1$ do not span the whole Hilbert space corresponding to 
$\textbf{A}_w$, 
that is
\begin{equation}
\dim\mathrm{span} \left\{\otimes_{j=1}^w \ket{u_j^{(i)}}_{A_j}\right\}_{i=2}^{s+1} < \dim \mathcal{H}_{\textbf{A} _w}.
\end{equation}
 Now, if the remaining vectors $\mathcal{K}_2 := \mathcal{K} \setminus \mathcal{K}_1= \{\ket{v_1}, \ket{v_{s+2}}, \ldots , \ket{v_{m}}\}$  do not span 
 on $\mathbf{A}\setminus\textbf{A}_w$ the corresponding Hilbert space, i.e.,
\begin{align}\label{deficiency}
 \dim\mathrm{span}\: \mathcal{K}_2^{\mathbf{A}\setminus\textbf{A}_w} < \dim \mathcal{H}_{\textbf{A}\setminus\textbf{A} _w} ,
 \end{align}
 where
 \begin{align}
 \mathcal{K}_2^{\mathbf{A}\setminus\textbf{A}_w}=\left\{\otimes_{j={w+1}}^n \ket{u_j^{(i)}}_{A_j}\right\}_{i=1,s+2,s+3,\dots,m},
  \end{align}
 then we will find a vector orthogonal to $\mathcal{K}$ in the form 
\begin{equation}
\left(\otimes_{j=1}^w\ket{u_j^{(1)}}_{A_j}\right) \otimes \ket{\xi}_{\textbf{A}\setminus\textbf{A}_w}
\end{equation}
with an arbitrary $(n-w)$-partite  vector $\ket{\xi}$ orthogonal to $\mathcal{K}_2^{\mathbf{A}\setminus\textbf{A}_w}$.
Clearly, the vector $\ket{\xi}$ would have, in this case, an entanglement depth of at most $(n-w)$,
and as a  consequence, $\mathcal{K}$ would not be a $\mathrm{UPB}_{\mathrm{n-w}}$.

We are interested in the permissible cardinalities of $\mathrm{UPB}_{\mathrm{k-1}}$'s 
in which case we need to consider 
\begin{equation} \label{warek}
    w=n-k+1
\end{equation}
and verify when the condition \eqref{deficiency} holds true. This will provide us with the forbidden cardinalities of the bases and by negating the obtained bound we will arrive at \eqref{k-upb-bound}.
A sufficient condition for \eqref{deficiency} is simply that the number of states in $\mathcal{K}_2^{\mathbf{A}\setminus\textbf{A}_w}$
is smaller than the dimension of $\mathcal{H}_{\textbf{A}\setminus\textbf{A} _w}$, that is
%
\begin{equation}\label{eq}
    m-s \leq d^{n-w}-1 :=r,
\end{equation}
which needs to be solved for $m$.

Let us define the following function corresponding to the left-hand side of Eq. \eqref{eq} 
\begin{align}
    &f_w(m) = m-s \nonumber \\ &= m -w \left\lfloor \frac{m-1}{n} \right\rfloor - \min\left(w, m-1-n \left\lfloor \frac{m-1}{n} \right\rfloor\right).
    \end{align}
    Using 
 the identities $\min(x,y)=x+y-\max(x,y)$ and $\max(x,y)=\max(x-z,y-z)+z$, we can write this function as
    \begin{align}\label{funkcja-ef}
    f_w(m)=&(n-w)\left\lfloor \displaystyle\frac{m-1}{n} \right\rfloor \nonumber \\
    &+ \max\left(m-w-n \left\lfloor \displaystyle\frac{m-1}{n}\right\rfloor,1\right).
\end{align}
It is straightforward to show that this function has the following property (see Appendix \ref{funkcja-fy}) 
\begin{widetext}
\begin{equation} \label{funkcjusz}
    f_w(m+1)= \left\{
    \begin{array}{ccc}
        f_w(m)+1 & \mathrm{if} & m-n \displaystyle\left\lfloor \frac{m}{n} \right\rfloor \ge w+1 \; \mathrm{or} \; n \mid m, \\ [2ex]
        f_w(m) & \mathrm{if} & \hspace{-0.4cm} m-n \displaystyle\left\lfloor \frac{m}{n} \right\rfloor \le w \; \mathrm{and} \; n \nmid m.
    \end{array} 
    \right.
\end{equation}
\end{widetext}
It then follows that for a fixed $w$ function $f_w(m)$  is non-decreasing in $m$ and takes all integer values. This allows us to look for the largest solution of $f_w(m)=r$, that is
\begin{equation} \label{warrunek}
    (n-w)\left\lfloor \frac{m-1}{n} \right\rfloor + \max{\left(m-w-n \left\lfloor \frac{m-1}{n}\right\rfloor,1\right)} = r.
\end{equation}
We now observe that we can replace the maximum with the function inside of it, as for any $m$ for which this function is less than one, we can find another argument $\tilde{m} > m$ that reaches $1$ without changing the value of the whole function under scrutiny; precisely, we can take $\tilde{m}=n \lfloor \frac{m-1}{n} \rfloor +w+1$ for which $f_w(m)=f_w(\tilde{m})$. In turn, we can consider
\begin{equation} \label{warunek}
    (n-w)\left\lfloor \frac{m-1}{n} \right\rfloor + m-w-n \left\lfloor \frac{m-1}{n}\right\rfloor = r
\end{equation}
with the assumption that 
\begin{equation} \label{ograniczenie-floor}
    \left\lfloor \frac{m-1}{n}\right\rfloor \le \frac{m-w-1}{n}.
\end{equation}
Eq. \eqref{warunek} simplifies to
\begin{equation} \label{proste}
    m- w\: q = r+w, \quad \quad q:= \left\lfloor \frac{m-1}{n}\right\rfloor.
\end{equation}
From equations \eqref{ograniczenie-floor} and \eqref{proste} we have
\begin{equation}
q \le \frac{r+qw-1}{n},
\end{equation}
which after solving for $q$ gives
\begin{equation}
   q \le \frac{r-1}{n-w},
\end{equation}
implying further that the largest solution to Eq. \eqref{proste} is
\begin{equation}
    m = r + w + w \left\lfloor \frac{r-1}{n-w} \right\rfloor.
\end{equation}
Plugging $r = d^{n-w}-1$ back in and substituting $w=n-k+1$ [\eqref{warek}], we finally obtain the largest excluded cardinality 
\begin{equation}
    m = d^{k-1} + n-k + (n-k+1) \left\lfloor \frac{d^{k-1}-2}{k-1} \right\rfloor,
\end{equation}
meaning that the minimal permissible size of a $\mathrm{UPB}_{\mathrm{k-1}}$ is
\begin{equation}
    d^{k-1} +  (n-k+1) \left(\left\lfloor \frac{d^{k-1}-2}{k-1} \right\rfloor+1\right),
\end{equation}
as claimed in Eq. \eqref{k-upb-bound}.
\end{proof}

We have proved the statement for equal local dimensions but a similar reasoning can be applied to the general case. We omit the derivation here.

We now show when the obtained bound is non-trivial, i.e., when it gives a  lower bound strictly larger than the one provided by Eq. \eqref{m-min-kupb}. We have the following chain of inequalities:
\begin{align}
&d^{k-1} +  (n-k+1) \left(\left\lfloor \frac{d^{k-1}-2}{k-1} \right\rfloor+1\right) \nonumber\\
& \ge d^{k-1} +(n-k+1) \frac{d^{k-1}-1}{k-1}
\label{first}
\\
&= t(d^{k-1}-1)+1+(d^{k-1}-1)\left(  \frac{n}{k-1} -t \right)\nonumber\\
&\ge t(d^{k-1}-1)+d^{(k-1)(\frac{n}{k-1}-t)} \label{second}\\
&= t(d^{k-1}-1) + d^{n-(k-1)t},
\end{align}
where $t=\lfloor \frac{n}{k-1}\rfloor$.
The first inequality follows from the fact that for any 
pair of integers $x$ and $y>0$ 
 the following inequality holds true:
\begin{equation}
\left\lfloor \frac{x}{y}\right\rfloor = \left\lceil \frac{x+1}{y} \right\rceil -1 \ge \frac{x+1}{y}-1    
\end{equation}
and the equality in (\ref{first}) holds iff $k-1$ divides $d^{k-1}-1$. The second inequality, Eq. (\ref{second}), follows from Bernoullie's inequality, stating that
$(1+x)^r\le 1+xr$ with $x > -1$ and $0\le r \le 1$, here applied with $x=d^{k-1}-1$ and $r=n/(k-1)-t$. The equality in this case holds iff $k-1$ divides $n$ (then $r=0$; $r=1$ is never the case).

Confronting the above conditions for equalities with Eq. \eqref{m-min-kupb} provides us with the cases when Eq. \eqref{k-upb-bound} is certainly non-trivial. For example, this is the case for odd $d$ if additionally it holds that $k-1 \nmid d^{k-1}-1$ or $k-1\nmid n$. 
On the other hand, there are clearly cases when it is trivial. This happens for example in the case of $3$-CESs in systems of four qubits ($n=4$, $d=2$, $k=3$). Then, our bound gives $8$ as the minimal cardinality and the same number is obtained from the theory of bipartite UPBs applied to a $4\otimes 4$ system.  

Note that for $k=n$ (GUPBs) the bound reproduces, as it should, the one from \cite{MD-PRA}, in which case it is always non-trivial.
This bound was recently improved in \cite{kiribela}. At this point, it is not clear whether our current bound on the permissible cardinalities of $\mathrm{UPB}_{\mathrm{k-1}}$'s can be improved using the graph theory techniques of \cite{kiribela}.

\subsection{Construction of $k$-CESs from UPBs}

While the bound of Proposition \ref{prop-bound} puts limitations on the sizes of $\mathrm{UPB_{k-1}}$'s,  it does not say anything about the possibility of the actual constructions of bases with permissible cardinalities. This is particularly important in view of the unknown status of this problem for GUPBs ($k=n$). It turns out that $\mathrm{UPB_{k-1}}$'s do exist for $k<n$ and in fact there are already examples in the literature for systems of four qubits \cite{chen-upb}.

We will now reconstruct one of the four-qubit basis from Ref. \cite{chen-upb} and show the existence of a $\mathrm{UPB_{3}}$ in the case of five qubits. 
In both situations, we are interested in the case of $k=n-1$.
For this aim we use the results of Ref. \cite{UPB-no-quantum}, where a  recursive procedure was given for a construction of UPBs of size $2^{n-1}$ for any $n$.

Let us begin with $n=4$. The UPB from Ref. \cite{UPB-no-quantum} is given by
\begin{align}
\mathcal{K}^{(4)}=\{&|0000\rangle,\quad|01fe\rangle,\quad |1e1e\rangle,\quad |1fe0\rangle ,\nonumber\\
&|e001\rangle,\quad|e1ff\rangle,\quad|fe1f\rangle,\quad|ffe1\rangle \},
\end{align}
where $\{\ket{e},\ket{f}\}$ is any orthonormal basis different from $\{\ket{0},\ket{1}\}$, e.g., $\{\ket{+},\ket{-}\}$. It is easy to see that there exists a vector with an entanglement depth equal to $2$ orthogonal to $\mathcal{K}^{(4)}$, i.e., this set is not a $\mathrm{UPB_{2}}$ . For example, such a vector is given by  $\ket{\varphi}_{A_1A_4}\otimes \ket{\psi}_{A_2A_3}$, where $\ket{\varphi} \perp \{\ket{10}, \ket{f1}\}$ and $\ket{\psi} \perp \{\ket{00}, \ket{1f},\ket{e1}\}$. We can see that this stems from the fact that there are pairs of vectors in $\mathcal{K}^{(4)}$, which are identical on two subsystems. Here, there are three such pairs, but for sets with a cardinality of $8$ this would already be an obstacle if there were only two of them.
However, the following amendment can be made to $\mathcal{K}^{(4)}$ to avoid this problem: 
    in the pairs of vectors with $\ket{1}$ and $\ket{e}$ on the first site ($A_1$), we swap vectors on the third site ($A_3$).
This leads to the following set
\begin{align}
\overline{\mathcal{K}^{(4)}}=\{&|0000\rangle,\quad|01fe\rangle,\quad|1eee\rangle,\quad |1f10\rangle, \nonumber\\
&|e0f1\rangle,\quad|e10f\rangle,\quad|fe1f\rangle,\quad|ffe1\rangle \},
\end{align}
which recovers one of the UPBs given in Ref. \cite{chen-upb}, which showed that it is unextendible for any of the two vs two party bipartitions, i.e., that $\overline{\mathcal{K}^{(4)}}$ is 
 a $\mathrm{UPB_{2}}$. As noted in Section \ref{ces-upb-bound}, its cardinality is the minimal possible.

In the $n=5$ case the UPB from Ref. \cite{UPB-no-quantum} is as follows: 

\begin{align}
    \mathcal{K}^{(5)}= \{
&|00000\rangle, \nonumber \\
&|001fe\rangle, \;|e001f\rangle, \;|fe001\rangle,\;
|1fe00\rangle,\; |01fe0\rangle, \nonumber \\ 
&|01e1e\rangle,\;\; |e01e1\rangle,
\;|1e01e\rangle, \;\;|e1e01\rangle, \;|1e1e0\rangle, \nonumber \\ 
& |1fffe\rangle, \;
|e1fff\rangle,\: |fe1ff\rangle,\; |ffe1f\rangle, |fffe1\rangle \}.  \nonumber
\end{align}
The set has a visibly cyclic structure and when proving the (non)existence of a vector with an entanglement depth of $3$ orthogonal to all its elements one needs to consider only two bipartitions because the rest are equivalent to one of them. These bipartitions are 
$A_1A_2|A_3A_4A_5$ and $A_1A_3|A_2A_4A_5$. With an exhaustive search based on 
Fact 
\ref{crucial} we have verified that $\mathcal{K}^{(5)}$ is already a $\mathrm{UPB}_3$ and no modifications of it are necessary.

Since the UPBs of Ref. \cite{UPB-no-quantum}  were constructed recursively, it is plausible that they could be used, directly or after some modifications similar to those shown above in the four qubit case, to obtain $\mathrm{UPB_{n-2}}$'s for any $n$. 

Let us conclude with an observation that in general  the case $k-1|n$ is somewhat special because then we deal with systems with $t$ subsystems of equal local dimensions $d^{k-1}$ (recall our assumptions of dimension $d$ of each subsystem $A_i$) in the "coarse-grained" Hilbert space [Eq. \eqref{new-hilbert}]. Constructions of UPBs for homogeneous  systems (i.e., with equal local dimensions) are far better explored that those for heterogeneous systems (i.e., different local dimensions). A possible route to find $\mathrm{UPB}_{\mathrm{k-1}}$'s could then be to  build  UPBs first for \eqref{new-hilbert} and then look for their proper "fine-grained" versions in the original space.

\section{Conclusions and outlook}

We proposed to consider the notion of the entanglement depth in the context of entangled subspaces by introducing  the notion of completely entangled subspaces of entanglement depth $k$ ($k$-CESs), that is subspaces composed solely of pure states with an entanglement depth of at least $k$. We presented a universal construction of such subspaces that works for any multipartite systems. We also considered the relationship between unextendible product bases (UPBs) and $k$-CESs. In particular, we provided a non-trivial bound on the cardinalities of UPBs whose orthocomplements are $k$-CESs. Further, we discussed the problem of constructing such UPBs and provided some examples in systems of several qubits.

From the general point of view, it is natural to expect that the  introduced notion may turn out to be relevant for analyses of many-body systems, where the entanglement depth is a figure of merit. 
The already established connection between entangled subspaces and quantum error correction may  suggest that $k$-CESs could be of some use also in this area and their analysis could help us to understand this relationship better. Moreover, it would be interesting to see whether the $k$-CESs we have provided could be useful in further studies on counterexamples to the additivity of the minimum output R\'enyi entropy of quantum channels.
 
There are also several concrete open problems that naturally emerge from our study. For example, it would be desirable to see whether the obtained bound on the cardinalities of UPBs leading to $k$-CESs can be further improved, in particular, using graph-theoretic methods, and to research the possibility of a general practical construction of such UPBs, as this would automatically entail a construction of  PPT states. 
Finally, there is constant demand for practical methods of subspace entanglement certification and quantification tailored to specific types of entangled subspaces and this line or research is also worth exploration.

\section{Acknowledgments}

K.V. and R. A. acknowledge the support of the National Science Center (Poland) through the SONATA BIS project (grant no. 2019/34/E/ST2/00369). Enlightening discussions with M. Wieśniak are acknowledged.

\appendix
\section*{Appendices}

\section{Minimal value of \eqref{do-minimalizacji}}\label{max-dim}

Let us start with  two auxiliary results.

\begin{lem}\label{pierwszy-lemat}
Let $n$ and $k \le n$ be positive integers.
Further, let 
    \begin{equation}
        \boldsymbol{x}=(x_1,x_2, \dots, x_n)
        \end{equation}
        be a non-increasing sequence of non-negative integers, such that 
        \begin{equation}
        x_{i+1} \le x_{i} \le k-1, \quad \quad \sum_i x_i =n
        \end{equation}
        and 
        \begin{equation} \label{def-y}
        \boldsymbol{y}= \Big(\underbrace{k-1,\dots,k-1}_{\text t\; \mathrm{times}},n-t(k-1),0,\dots,0 \Big),
        \end{equation}
        where 
        \begin{equation}
        t= \left\lfloor \frac{n}{k-1}\right\rfloor.
        \end{equation}
    
    Then,  for any $p\le n$ it holds that
    \begin{equation}
       \sum_{i=1}^p x_i \le \sum_{i=1}^p y_i,
    \end{equation}
    i.e.,  $\boldsymbol{x}$ is majorized by $\boldsymbol{y}$, written as $\boldsymbol{x} \prec \boldsymbol{y}$.
\end{lem}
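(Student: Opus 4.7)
The plan is to verify the majorization $\boldsymbol{x}\prec\boldsymbol{y}$ directly by splitting the range of $p$ into three regimes determined by the shape of $\boldsymbol{y}$: the ``plateau'' $1\le p\le t$ where $y_i=k-1$, the transition index $p=t+1$ where $y_{t+1}=n-t(k-1)$, and the ``tail'' $p\ge t+1$ where the remaining $y_i$ vanish. Since both sequences have non-negative integer entries and the total sum of $\boldsymbol{x}$ equals $n$ by hypothesis (matching the total sum of $\boldsymbol{y}$, as $t(k-1)+(n-t(k-1))=n$), the majorization reduces to comparing partial sums up to $p=t$.

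In more detail, for $1\le p\le t$ the bound $x_i\le k-1=y_i$ from the hypothesis gives immediately
\begin{equation}
\sum_{i=1}^{p}x_i \;\le\; p(k-1)\;=\;\sum_{i=1}^{p}y_i.
\end{equation}
For $p=t+1$ one has $\sum_{i=1}^{t+1}y_i=t(k-1)+\bigl(n-t(k-1)\bigr)=n$, while $\sum_{i=1}^{t+1}x_i\le\sum_{i=1}^{n}x_i=n$ because the $x_i$ are non-negative; hence the partial-sum inequality holds. Finally, for $t+1\le p\le n$ the tail of $\boldsymbol{y}$ contributes only zeros, so $\sum_{i=1}^{p}y_i=n$, and the same monotonicity argument $\sum_{i=1}^{p}x_i\le\sum_{i=1}^{n}x_i=n$ closes the case.

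I anticipate no real obstacle: the constraint $x_i\le k-1$ is exactly tuned to bound the first $t$ terms by those of the ``saturated'' sequence $\boldsymbol{y}$, and the equality of total sums handles the rest. The only point to be careful about is the boundary case $n-t(k-1)=0$ (which occurs when $k-1\mid n$); then $\boldsymbol{y}$ has $t$ positive entries rather than $t+1$, but the same three-regime argument still works, with the transitional step collapsing trivially. The non-increasing ordering of $\boldsymbol{x}$ is not used in the above estimates, so the conclusion in fact holds for any non-negative integer sequence satisfying the two constraints, which is consistent with the general characterization of majorization by a maximal element in the set of such sequences under the dominance order.
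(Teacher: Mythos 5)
Your proof is correct: the three-regime check (partial sums up to $p\le t$ bounded by $p(k-1)$ via $x_i\le k-1$, and partial sums for $p\ge t+1$ bounded by the total $\sum_i x_i=n=\sum_i y_i$ via non-negativity) is exactly the direct verification the lemma calls for, and you correctly handle the degenerate case $k-1\mid n$. The paper itself dismisses the lemma with ``The result is obvious,'' so your argument simply supplies the omitted details of that same obvious route.
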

\begin{proof}
The result is obvious.
\end{proof}

\begin{lem}\label{drugi-lemat} (see, e.g., \cite{schur-convex}) Let $g(\cdot)$ be a convex function and $\boldsymbol{x} \prec \boldsymbol{y}$. It then holds that
\begin{equation} \label{major}
\sum_i g(x_i) \le\sum_i g(y_i).
\end{equation}
Furthermore, if $g(\cdot)$ is strictly convex and $\bold{x}$ is not a permutation of $\bold{y}$  then the inequality in Eq. \eqref{major} is strict. 
\end{lem}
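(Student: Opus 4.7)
The plan is to invoke the classical Hardy-Littlewood-P\'olya characterization of majorization: $\boldsymbol{x}\prec\boldsymbol{y}$ if and only if $\boldsymbol{x}$ can be obtained from $\boldsymbol{y}$ by a finite sequence of T-transforms, i.e., elementary moves that replace two components $y_i>y_j$ by $y_i'=(1-\lambda)y_i+\lambda y_j$ and $y_j'=\lambda y_i+(1-\lambda)y_j$ for some $\lambda\in[0,1/2]$ while leaving the remaining components untouched. This equivalence I would take as a standard fact, justifiable either via Birkhoff's theorem on doubly stochastic matrices (every doubly stochastic matrix is a convex combination of permutations, and such a matrix mapping $\boldsymbol{y}$ to $\boldsymbol{x}$ can be factorized into two-index averages) or by a direct induction on the number of indices on which $\boldsymbol{x}$ and $\boldsymbol{y}$ differ once both are sorted in non-increasing order.

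The first step is to prove the inequality for a single T-transform. All terms $g(y_k)$ with $k\neq i,j$ cancel between the two sides, so it suffices to bound $g(y_i')+g(y_j')$. Convexity of $g$ gives
\begin{equation*}
g\bigl((1-\lambda)y_i+\lambda y_j\bigr)+g\bigl(\lambda y_i+(1-\lambda)y_j\bigr)\le g(y_i)+g(y_j),
\end{equation*}
since the two summands on the left are individually bounded by $(1-\lambda)g(y_i)+\lambda g(y_j)$ and $\lambda g(y_i)+(1-\lambda)g(y_j)$, whose sum telescopes to $g(y_i)+g(y_j)$. Iterating this bound along a T-transform decomposition $\boldsymbol{x}=T_m\circ\cdots\circ T_1(\boldsymbol{y})$ immediately yields $\sum_i g(x_i)\le\sum_i g(y_i)$, establishing \eqref{major}.

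For the strict statement, I would argue that whenever $\boldsymbol{x}$ is not a permutation of $\boldsymbol{y}$ the decomposition must contain at least one genuinely non-trivial T-transform (one with $\lambda\in(0,1/2]$ acting on two components with $y_i\neq y_j$); otherwise every step is a coordinate permutation and $\boldsymbol{x}$ itself would be a permutation of $\boldsymbol{y}$, contrary to hypothesis. At that single step, strict convexity upgrades the above inequality to a strict one, and because each of the remaining steps is still weakly inequality-preserving, strictness propagates to the final bound.

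The main obstacle I anticipate is not the convexity computation, which is routine, but rather giving a clean and concise justification of the T-transform decomposition without digressing into the combinatorics of doubly stochastic matrices. A route that sidesteps this altogether is Abel summation on the sorted sequences: writing $\sum_i g(y_i)-\sum_i g(x_i)$ as a sum of partial-sum gaps $S_k^{\boldsymbol{y}}-S_k^{\boldsymbol{x}}\ge 0$ weighted by non-negative slope differences of $g$ between consecutive sorted entries, and then invoking $S_n^{\boldsymbol{x}}=S_n^{\boldsymbol{y}}$ together with the majorization inequalities. I would nonetheless prefer the T-transform presentation, as it isolates the role of strict convexity in the most transparent way.
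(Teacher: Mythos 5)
Your proposal is correct. Note, however, that the paper does not prove this lemma at all: it is stated as a known result with a pointer to the Marshall--Olkin--Arnold monograph \cite{schur-convex}, so there is no in-paper argument to compare against. What you have written is essentially the textbook proof that the map $\boldsymbol{x}\mapsto\sum_i g(x_i)$ is Schur-convex for convex $g$: the single-T-transform convexity estimate is right (the two convexity bounds do telescope to $g(y_i)+g(y_j)$), the reduction to T-transforms is the classical Hardy--Littlewood--P\'olya decomposition, and your handling of strictness is sound --- if every T-transform in the chain were trivial the composition would be a permutation, so a non-permutation $\boldsymbol{x}$ forces at least one step with $\lambda\in(0,1)$ (or $\lambda=1/2$ in your convention) acting on unequal entries, at which strict convexity gives a strict drop that the remaining weakly decreasing steps preserve. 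The only external input you lean on is the T-transform decomposition theorem itself, which you correctly flag and which is no deeper than the lemma being proved; the Abel-summation route you sketch as an alternative would make the argument fully self-contained at the cost of a slightly messier bookkeeping of partial sums. Either version is a legitimate replacement for the paper's bare citation.
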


We can now prove the following.

\begin{fakt}\label{max-value}
The maximal value of
\begin{equation} \label{to-maximize}
\sum_{i=1}^{n} (d^{n_i} - 1) + 1
\end{equation}
over all  $\boldsymbol{n}=(n_1, n_2, \dots, n_n)$ such that $\sum_i n_i=n$ and $0 \le n_i\le k-1$ is
\begin{equation} \label{maksimum}
t d^{k-1} -t + d^{ n-t(k-1)},
\end{equation}
where $t=\left\lfloor \frac{n}{k-1} \right\rfloor$. 
\end{fakt}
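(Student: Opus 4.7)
The plan is to reduce the maximization to the majorization machinery already set up by Lemmas \ref{pierwszy-lemat} and \ref{drugi-lemat}. First, I would pad any admissible tuple $(n_1,\ldots,n_r)$ with zeros up to length $n$ and reorder its entries in non-increasing order; neither operation changes the value of $\sum_i(d^{n_i}-1)$ nor the constraints $\sum_i n_i=n$ and $n_i\le k-1$, since $d^0-1=0$. This puts the tuple into the form required by Lemma \ref{pierwszy-lemat}, which immediately yields that every admissible sequence is majorized by the distinguished vector $\boldsymbol{y}$ defined in Eq.~\eqref{def-y}.

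Next, I would observe that $g(x):=d^x-1$ is strictly convex on $\mathbb{R}_{\ge 0}$ for $d\ge 2$, since its second derivative equals $(\ln d)^2 d^x>0$. Lemma \ref{drugi-lemat} then applies directly and gives
\begin{equation}
\sum_{i=1}^n(d^{n_i}-1)\;\le\;\sum_{i=1}^n(d^{y_i}-1),
\end{equation}
with strict inequality unless $\boldsymbol{n}$ is a permutation of $\boldsymbol{y}$. Adding the constant $+1$ on both sides produces the desired upper bound on \eqref{to-maximize} and identifies $\boldsymbol{y}$ as an attaining tuple.

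Finally, I would evaluate \eqref{to-maximize} at $\boldsymbol{y}$: the $t$ entries equal to $k-1$ contribute $t(d^{k-1}-1)$, the single entry equal to $n-t(k-1)$ contributes $d^{n-t(k-1)}-1$, and the remaining $n-t-1$ zero entries contribute nothing. Combined with the $+1$ this yields $td^{k-1}-t+d^{n-t(k-1)}$, matching \eqref{maksimum}. The borderline case $k-1\mid n$ is handled uniformly: then $n-t(k-1)=0$ and the corresponding term contributes $d^0-1=0$, so the closed form still evaluates correctly to $td^{k-1}-t+1$. The entire argument rests on the two preparatory lemmas, so I do not anticipate any real obstacle; the only point that deserves explicit mention is the zero-padding step that brings arbitrary tuples of length $r\le n$ into the fixed length-$n$ format required by Lemma \ref{pierwszy-lemat}.
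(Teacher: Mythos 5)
Your proposal is correct and follows essentially the same route as the paper's own proof: majorization by $\boldsymbol{y}$ via Lemma \ref{pierwszy-lemat} combined with the strict convexity of $g(x)=d^{x}-1$ and Lemma \ref{drugi-lemat}, then direct evaluation at $\boldsymbol{y}$. The only difference is that you spell out the normalization (zero-padding and reordering) and the borderline case $k-1\mid n$, which the paper leaves implicit.
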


\begin{proof}
   From Lemma \ref{pierwszy-lemat} and Lemma \ref{drugi-lemat} with 
   $g(x)=d^{x}-1$, which is strictly convex, it follows immediately that the maximal value of 
   $\sum_i (d^{n_i}-1)$ over all permissible $\boldsymbol{n}$'s  corresponds to the choice $\boldsymbol{n}= \boldsymbol{y}$ with $\boldsymbol{y}$ defined 
   in Eq. \eqref{def-y}. 
   Eq. \eqref{maksimum} then follows from the direct substitution of the optimal $n_i$'s.
\end{proof}

The minimal value of the expression in Eq. \eqref{do-minimalizacji} is obtained by substracting the obtained value \eqref{maksimum} from $d^n$.

\section{Properties of $D_{k-CES}^{max}$}\label{monotonicznosc-D}
In this appendix we prove some basic properties of the maximal dimension of a $k$-CES. In what follows we denote:
\begin{align}
D(d,k,n)&:= D_{k-CES}^{max}\nonumber\\
&=d^n-\left(t_{n,k} d^{k-1} + d^{T_{n,k}}-t_{n,k}\right)
\end{align}
with
\begin{align}
& t_{n,k}:=t=\left\lfloor \frac{n}{k-1} \right\rfloor, \nonumber \\
& T_{n,k} := n-(k-1)t_{n,k}.
\end{align}
\begin{fakt}
    The maximal dimension of a $k$-CES is strictly
    \begin{itemize}
        \item[(a)]  increasing in $d$ for fixed $k$ and $n$: $D(d+1,k,n)> D(d,k,n)$,
        \item[(b)] decreasing in $k$ for fixed $d$ and $n$: $D(d,k+1,n) < D(d,k,n)$,
        \item[(c)] increasing in $n$ for fixed $k$ and $d$: $D(d,k,n+1)> D(d,k,n)$.
    \end{itemize}\end{fakt}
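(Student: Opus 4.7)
My approach handles the three monotonicities separately, leveraging the majorization/convexity machinery of Appendix \ref{max-dim} for the most conceptual case.

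\textbf{Part (b).} Fix $d$ and $n$. Fact~\ref{max-value} identifies the deficit $E(d,k,n) := d^n - D(d,k,n)$ with the maximum of $\sum_i (d^{x_i}-1)$ over compositions of $n$ whose parts are bounded by $k-1$. Every such composition is also feasible for the $(k+1)$-CES problem, so $E(d,k,n) \le E(d,k+1,n)$ immediately. For strictness I would observe that the optimizer $\boldsymbol{y}_{k+1}$ contains $t_{n,k+1} \ge 1$ entries equal to $k$ whenever $k \le n-1$, so it cannot be a permutation of any composition admissible for the $k$-CES problem; strict convexity of $g(x) = d^x - 1$ for $d \ge 2$ then promotes Lemma~\ref{drugi-lemat} to its strict version, yielding $D(d,k+1,n) < D(d,k,n)$.

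\textbf{Part (c).} Fix $d$ and $k$. Writing $E(d,k,n) = t_{n,k}(d^{k-1}-1) + d^{T_{n,k}}$, I case-split on $T_{n,k}$: if $T_{n,k} \le k-3$ then $t_{n+1,k} = t_{n,k}$ and $T_{n+1,k} = T_{n,k}+1$, giving $\Delta E := E(d,k,n+1)-E(d,k,n) = d^{T_{n,k}}(d-1)$; if $T_{n,k} = k-2$ then $t_{n+1,k} = t_{n,k}+1$ and $T_{n+1,k} = 0$, giving $\Delta E = d^{k-2}(d-1)$. In either case,
\begin{equation*}
D(d,k,n+1) - D(d,k,n) = d^n(d-1) - \Delta E = (d-1)\bigl(d^n - d^{\alpha}\bigr),
\end{equation*}
with $\alpha \in \{0,1,\ldots,k-2\}$, which is strictly positive since $k-2 < n$.

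\textbf{Part (a).} Fix $k$ and $n$, so that $t$ and $T$ are independent of $d$, and write $D(d,k,n) = P(d) + t$ with $P(x) := x^n - t x^{k-1} - x^T$ viewed as a real polynomial. It suffices to show that $P$ is strictly increasing on $[2,\infty)$. Using the structural estimates $t(k-1) \le n$ (from the definition of $t$) and $T \le k-2$,
\begin{equation*}
P'(x) \ge n x^{n-1} - n x^{k-2} - (k-2)x^{k-3};
\end{equation*}
for $k \ge 3$, factoring $x^{k-3}$ reduces positivity on $[2,\infty)$ to the elementary bound $n x^{n-k+2} > nx + (k-2)$, which holds for $x \ge 2$ and $n \ge k$, while the case $k=2$ is immediate from $P'(x) = n x^{n-1} - n > 0$. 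The only genuinely nontrivial point across the three parts is strictness in (b): the raw majorization Lemma~\ref{pierwszy-lemat} yields only $\le$, so one must invoke the strict variant of Lemma~\ref{drugi-lemat} and verify that the optimal compositions $\boldsymbol{y}_k$ and $\boldsymbol{y}_{k+1}$ are genuinely different multisets, which is the case because $\boldsymbol{y}_{k+1}$ contains the value $k$ that is forbidden in $\boldsymbol{y}_k$.
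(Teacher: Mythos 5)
Your proof is correct and follows essentially the same route as the paper's: part (b) via majorization of the optimal compositions combined with the strict version of Lemma~\ref{drugi-lemat}, part (c) via an explicit case-split computation of the increment (your condition $T_{n,k}=k-2$ is equivalent to the paper's condition $k-1\mid n+1$), and part (a) via a derivative estimate of the same function of $d$. If anything, your handling of strictness in (b) and the unified form $(d-1)(d^{n}-d^{\alpha})$ in (c) are slightly more explicit than the paper's versions.
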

    \begin{proof}
       \textit{Case (a).} Assume now $d$ is a continuous parameter such that $d \in [2, \infty )$. It holds
       \begin{align}
          & \frac{\partial D (d,k,n)}{ \partial d} =  \nonumber \\
          & = \frac{1}{d} \left[n d^{n}- \left( t_{n,k} (k-1) d^{k-1}+T_{n,k}d^{T_{n,k}}\right) \right]
          \nonumber \\
          & > \frac{n}{d} \left[d^{n}- \left( t_{n,k} d^{k-1}+ \mathrm{sgn} (T_{n,k}) d^{T_{n,k}}\right) \right].
       \end{align}
The term in the parentheses is in fact a difference of the product and the sum of numbers all being larger than or equal to two and as such it is larger than or equal to zero implying that $\partial D /\partial d >0 $. This, in turn, means that $D(d,k,n)$ is strictly increasing in $d$. $\square$
\newline
       \textit{Case (b).} Recall  from Fact \ref{max-value} in  Appendix \ref{max-dim} that the maximal dimensions $D(d,k,n)$ and $D(d,k+1,n)$ correspond in \eqref{to-maximize} to, respectively, $\bold{y}=(k-1,\dots,k-1, T_{n,k})$ with $(k-1)$ appearing $t_{n,k}$ times [cf. Eq. \eqref{def-y}] and 
       \begin{align}
       \tilde{\bold{y}}=(\underbrace{k,\dots,k}_{\text t_{n,k+1}\; \mathrm{times}},T_{n,k+1},0,\dots,0).
       \end{align}
       (notice that $t_{n,k}\ge t_{n,{k+1}}$ and we may need to pad zeros in $\tilde{\bold{y}}$ so that the vectors match in length). Clearly, $\bold{y} \prec \tilde{\bold{y}}$ and by Lemma \ref{drugi-lemat} it holds that $D(d,k+1,n)< D(d,k,n)$. $\square$
\newline
     \textit{  Case (c).} Taking into account that
       \begin{align}
       t_{n+1,k}= \begin{cases} t_{n,k}+1 & \mathrm{if\;\;}  k-1|n+1\\ 
       t_{n,k} & \mathrm{if\;\;}  k-1\nmid n+1
       \end{cases},
       \end{align}
       we have
       \begin{widetext}
       \begin{align}
           D(d,k,n+1)-D(d,k,n)= \begin{cases}
               d^{n+1}-d^n-d^{k-1}+1+d^{T_{n,k}-k+2}(d^{k-2}-1) & \mathrm{if\;\;}  k-1|n+1\\
               (d^n-d^{T_{n,k}})(d-1) & \mathrm{if\;\;}  k-1\nmid n+1
           \end{cases}.
       \end{align}
       \end{widetext}
       Since $2 \le k \le n$ we can easily verify that  $D(d,k,n+1)> D(d,k,n)$. $\square$
    \end{proof}
    Case (c) above supports the claim made in Section \ref{ka-cesy} that the construction of $(n-1)$-CESs in $n$-partite systems from GESs in $(n-1)$-partite ones does not lead to subspaces of the maximal dimension.

\section{Maximal $3$-CES: four qubit case}\label{example-k-ces}

In this appendix, we give a nine-dimensional, i.e., maximal, $3$-CES in the case of four qubits ($n=4$, $d=2$, $k=3$) obtained with the construction in Sec. \ref{konstrukcja}.

With the choice of nodes $x_i=i+1$, $i=0,1,\dots,6$, the Vandermonde vectors \eqref{nupb-k-ces} are now of the form (we omit the transpose and shift the index)
\begin{align}\label{vandy}
    & (1,i,i^2,\dots,i^{15})_{\bf{A}} = \nonumber\\
    & (1,i^8)_{A_1}\otimes (1,i^4)_{A_2}\otimes (1,i^2)_{A_3}\otimes (1,i)_{A_4},
\end{align}
with $i=1,2,\dots,7$. The matrix built from these vectors is totally positive. This implies that matrices constructed from bipartite vectors on any pair $A_m A_n$ [cf. Eq. \ref{lokalne-macierze}] are always full rank, meaning, in particular, that any four ($d^2$ with $d=2$) such vectors  span the whole four dimensional Hilbert space. Corollary \ref{corollary} tells us then that there does not exist a vector of the form $\ket{\phi}_{A_mA_n}\otimes \ket{\varphi}_{\bold{A}\setminus A_mA_n}$, i.e., a $2$-producible vector, which is orthogonal to all the vectors \eqref{vandy}. Consequently, the orthocomplement of the span of the vectors \eqref{vandy} is a $3$-CES. 
\newline
\indent
We present a non-orthogonal basis for this $3$-CES as a matrix whose rows are the basis vectors:

%
\begin{widetext}
%
%

\begin{align}\footnotesize
\left(
\begin{array}{cccccccccccccccc}
 -5040 & 13068 & -13132 & 6769 & -1960 & 322 & -28 & 1 & 0 & 0 & 0 & 0 & 0 & 0 & 0 & 0 \\
 -141120 & 360864 & -354628 & 176400 & -48111 & 7056 & -462 & 0 & 1 & 0 & 0 & 0 & 0 & 0 & 0 & 0 \\
 -2328480 & 5896296 & -5706120 & 2772650 & -729120 & 100653 & -5880 & 0 & 0 & 1 & 0 & 0 & 0 & 0 & 0 & 0 \\
 -29635200 & 74511360 & -71319864 & 34095600 & -8752150 & 1164240 & -63987 & 0 & 0 & 0 & 1 & 0 & 0 & 0 & 0 & 0 \\
 -322494480 & 806546916 & -765765924 & 361808139 & -91318920 & 11851664 & -627396 & 0 & 0 & 0 & 0 & 1 & 0 & 0 & 0 & 0 \\
 -3162075840 & 7876316448 & -7432417356 & 3481077600 & -867888021 & 110702592 & -5715424 & 0 & 0 & 0 & 0 & 0 & 1 & 0 & 0 & 0 \\
 -28805736960 & 71527084992 & -67178631520 & 31255287700 & -7721153440 & 972478507 & -49329280 & 0 & 0 & 0 & 0 & 0 & 0 & 1 & 0 & 0 \\
 -248619571200 & 615829294080 & -576265019968 & 266731264800 & -65430101100 & 8162874720 & -408741333 & 0 & 0 & 0 & 0 & 0 & 0 & 0 & 1 & 0 \\
 -2060056318320 & 5092812168444 & -4751761890876 & 2190505063109 & -534401747880 & 66184608126 & -3281882604 & 0 & 0 & 0 & 0 & 0 & 0 & 0 & 0 & 1 \\
\end{array}
\right).
\end{align}
\end{widetext}

\section{Generalized pigeonhole principle}\label{gpp}
For convenience, we recall here the statement of the generalized pigeonhole principle \cite{GPP}.

\begin{fakt}
If $pq+r$ objects are put into $q$ boxes, then for each $0 \le s \le q$ there exist $s$ boxes with the total number of objects in them at least $ps+\min(r,s)$.
\end{fakt}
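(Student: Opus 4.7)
The statement is a purely combinatorial claim generalizing the elementary pigeonhole principle, and my plan is to prove it by a direct extremal argument combined with a careful use of integrality. The guiding intuition is that distributing the objects as evenly as possible minimizes the occupancy of the ``largest'' subset of boxes, and in that evenly spread configuration the top $s$ boxes contain exactly $ps + \min(r,s)$ objects.

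First I would sort the $q$ boxes in non-increasing order of their occupancy, writing $a_1 \ge a_2 \ge \dots \ge a_q$ for the resulting counts, so that the $s$ boxes that the statement seeks are taken to be the first $s$ in this ordering. Setting $A := a_1 + \dots + a_s$, the total content decomposes as
\begin{equation}
A + (a_{s+1} + \dots + a_q) = pq + r.
\end{equation}
The target inequality is $A \ge ps + \min(r,s)$, and I would establish it by contradiction.

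Assume $A < ps + \min(r,s)$. By the sorting, the $s$-th largest count is bounded by the average of the top $s$, so $a_s \le A/s$, and since all counts are integers $a_s \le \lfloor A/s \rfloor$. Every box outside the top $s$ has count at most $a_s$, hence
\begin{equation}
a_{s+1} + \dots + a_q \;\le\; (q-s)\,\lfloor A/s \rfloor .
\end{equation}
Plugging the worst-case value $A = ps + \min(r,s) - 1$ into $\lfloor A/s \rfloor$, and splitting into the regime $s \le r$ (where the assumption forces $A < (p+1)s$) and the regime $s > r$ (where it forces $A < ps + r$), the floor can be pinned down explicitly; the total $A + (q-s)\lfloor A/s \rfloor$ is then strictly smaller than $pq + r$, contradicting the identity displayed above.

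The only point requiring care is the pinning of $\lfloor A/s \rfloor$: one obtains $\lfloor A/s \rfloor = p$ whenever $\min(r,s) \ge 1$, and $\lfloor A/s \rfloor = p-1$ in the degenerate boundary case $r = 0$, which must be handled separately. Once this small case analysis is in place, the contradiction is immediate, and no combinatorics beyond monotone rearrangement and rounding is needed; the result then follows by choosing the asserted $s$ boxes to be the top $s$ of the sorted list.
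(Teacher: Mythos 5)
Your argument is correct. Note first that the paper does not actually prove this Fact: Appendix D merely \emph{recalls} the statement of the generalized pigeonhole principle and attributes it to the cited reference, so there is no in-paper proof to compare against. Your extremal argument stands on its own and is sound: sorting the counts as $a_1\ge\dots\ge a_q$, assuming $A:=a_1+\dots+a_s\le ps+\min(r,s)-1$, and bounding every remaining box by $a_s\le\lfloor A/s\rfloor$ does force the total $A+(q-s)\lfloor A/s\rfloor$ below $pq+r$ in each of the regimes you identify (for $s\le r$ one gets $\lfloor A/s\rfloor\le p$ and a total at most $pq+s-1<pq+r$; for $s>r\ge 1$ again $\lfloor A/s\rfloor\le p$ and a total at most $pq+r-1$; for $r=0$ one gets $\lfloor A/s\rfloor\le p-1$ and a total at most $pq-1$), so the contradiction goes through. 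Two small points you should make explicit in a final write-up: the degenerate case $s=0$ (where the claim is vacuous but your division by $s$ is undefined) should be dispatched separately, and the integrality of $p$ is what licenses the step from $A/s<p+1$ to $\lfloor A/s\rfloor\le p$ — worth stating, since the Fact as quoted does not spell out the ranges of $p$ and $r$. With those remarks added, the case analysis you sketch (``the floor can be pinned down explicitly'') is exactly the content of the proof and nothing further is needed.
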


\section{Proof of Eq. \eqref{funkcjusz}} \label{funkcja-fy}

Let us start by recalling the definition of the function in question:
    \begin{align}\label{funkcja-fe}
    f_w(m)=&(n-w)\left\lfloor \displaystyle\frac{m-1}{n} \right\rfloor \nonumber \\
    &+ \max\left(m-w-n \left\lfloor \displaystyle\frac{m-1}{n}\right\rfloor,1\right)
\end{align}
with $w=n-k+1$.

\textit{Case (i): $n|m.$} We begin with the case of $n$ being a divisor of $m$. We then have
\begin{align}
\left\lfloor \displaystyle\frac{m}{n} \right\rfloor =\frac{m}{n}  \quad \mathrm{and} \quad \left\lfloor \displaystyle\frac{m-1}{n} \right\rfloor = \frac{m}{n}-1.
\end{align}
It follows that
    \begin{align}\label{drugi}
    f_w(m)&=(n-w) \left(\frac{m}{n}-1 \right) + \max\left(n-w,1\right)  \nonumber\\ 
    &=(n-w)\frac{m}{n}
\end{align}
since $n-w =k-1\ge 1$.
Further
    \begin{align}\label{one}
    f_w(m+1)=&(n-w)\left\lfloor \displaystyle\frac{m}{n} \right\rfloor \nonumber \\
    &+ \max\left(m+1-w-n \left\lfloor \displaystyle\frac{m}{n}\right\rfloor,1\right) \\
    =& (n-w)\frac{m}{n} + \max\left(1-w,1\right) \\
    =& (n-w)\frac{m}{n} +1 =f_w(m)+1
\end{align}
because $w=n-k+1 \ge 1$.

\textit{Case (ii): $n\nmid m$.} We now consider the opposite case of $n$ not being a divisor of $m$. It now holds
\begin{align}\label{rowne}
\left\lfloor \frac{m-1}{n} \right\rfloor =\left\lfloor \frac{m}{n} \right\rfloor.
\end{align}
Within the current case we analyze two subcases related to the value of the function under the maximum in $f_w(m)$. 

\textit{Subcase (ii,a): $m-n \displaystyle\left\lfloor \frac{m}{n} \right\rfloor -w \ge 1$}. We just note that this condition can never be satisfied simultaneously with case (i) and is valid only for $n \nmid m$. We have
\begin{align}
m-n \displaystyle\left\lfloor \frac{m-1}{n} \right\rfloor -w = m-n \displaystyle\left\lfloor \frac{m}{n} \right\rfloor -w \ge 1
\end{align}
in view of  Eq. \eqref{rowne}.
As a consequence, in both $f_w(m)$ and $f_w(m+1)$ we can take the function under the maximum as its value. We then have after trivial simplifications
    \begin{align}
    f_w(m)=&m-w-w\left\lfloor \displaystyle\frac{m-1}{n} \right\rfloor
\end{align}
and
    \begin{align}
    f_w(m+1)=&m-w+1-w\left\lfloor \displaystyle\frac{m}{n} \right\rfloor =f_w(m)+1.
\end{align}

\textit{Subcase (ii,b): $m-n \displaystyle\left\lfloor \frac{m}{n} \right\rfloor -w \le 0$.} Now, the maximum is easily seen to be equal to one for both $f_w(m)$ and $f_w(m+1)$, which, taking into account Eq. \eqref{rowne}, results in this subcase in $f_w(m)=f_w(m+1)$.

This concludes the proof of Eq. \eqref{funkcjusz}.

\end{document}